\definecolor{myurlcolor}{rgb}{0,0,0.7}
\def\be{\begin{equation}}
\def\ee{\end{equation}}
\def\bea{\begin{eqnarray*}}
\def\eea{\end{eqnarray*}}
\theoremstyle{plain}
\newtheorem{thrm}{\protect\theoremname}
\providecommand{\theoremname}{Theorem}
\newcommand*{\myproofname}{Proof}
\newcommand{\DOI}[2]{\href{https://doi.org/#1}{#2}}
\theoremstyle{definition}
\theoremstyle{remark}
\begin{document}

 \author{Y.T. Tsui}
 \email{YunTao\_Cui@outlook.com}
 \affiliation{School of Mathematical Sciences, Harbin Engineering University, Harbin 150001, People's Republic of China}

 \author{Sunho Kim}
 \email{Corresponding author: kimsunho81@hrbeu.edu.cn}
 \affiliation{School of Mathematical Sciences, Harbin Engineering University, Harbin 150001, People's Republic of China}

\title{Generalized wave-particle-Mixdness triality for n-path interferometers}
\begin{abstract}
The wave-particle duality, as one of the expressions of Bohr complementarity, is usually quantified by path predictability and the visibility of interference fringes. With the development of quantum resource theory, the quantitative analysis of wave-particle duality is increasing, most of which are expressed in the form of specific functions. In this paper, we obtain the path information measure for pure states by converting the coherence measure for pure state into a symmetric concave function. Then we prove the function as a path information measure is also valid for mixed states. Furthermore, we also establish a generalized wave-particle-mixedness traility. Although the mixedness proposed in the text is not a complete mixedness measure, it also satisfies some conditions of mixdness measure. From the perspective of resource theory, the path information we establish can be used as the measure of the resource of predictability, and the triaility relationship we establish reveals the relationship among coherence, predictability, purity and mixdness degree to a certain extent. According to our method, given either coherence measure or path information, a particular form of wave-particle-mixedness traility can be established. This will play an important role in establishing connections between wave, particle and other physical quantifiers.
\end{abstract}
\maketitle

\section{Introduction}

Wave-particle duality is a fundamental property of quantum, which plays a pivotal role in quantum mechanics and is ubiquitous in the microcosmic world. The earliest research on wave-particle duality can be traced back to the complementarity principle proposed by Bohr in 1982 \cite{ref1}, but it's just a qualitative idea. Wootters and Zurek were the first ones to quantitatively analyze wave-particle duality \cite{ref2}. Later, Greenberger and Yasin proposed a complementarity relation, they analyzed the problem with the assumption the unequal beams in a two-path interferometer allows for that unequal beams in a two-path interferometer allows for predicting, to a certain degree, which of the two paths the quantum followed. They established the duality relation \cite{ref3}:
\begin{equation}\label{eq1}
P^2+W^2\leq1
\end{equation}
Among them, $P$ quantifying the particle feature (path information) and $W$ quantifying the wave feature (interference) of the state in an interferometric setup \cite{ref4}.

As the theory progressed, Englert derived a new duality relation by including a path-detector into the interferometer which used to determine which path a specific quantum took \cite{ref5}:
\begin{equation}\label{eq2}
D^2+V^2\leq1
\end{equation}
In this inequality, $D$ is the distinguishability of the possible detector states and $V$ is a measure of the quality of the interference fridge. The equality holds if the quality of the interference fridge. Investigations on quantitative measures of wave and particle properties in a multi-path interferometer were initiated by Durr in \cite{ref6}, where criteria for generalized predictability and generalized visibility were brought up and later further pursued in \cite{ref7,ref8,ref9}.

In recent years, with the establishment of quantum resource theory, the quantitative analysis of wave-particle duality has returned to the public's view. These studies about the wave-particle duality have revealed close relationships and interplays between predictability, distinguishability, visibility \cite{ref6,ref7,ref8,ref10,ref11,ref12}, and some quantum informational concepts such as asymmetry \cite{ref13}, entropy \cite{ref14,ref15,ref16}, entanglement \cite{ref17,ref18,ref19,ref20,ref21,ref22}, coherence \cite{ref23,ref24,ref25,ref26,ref27,ref28,ref29,ref30,ref31,ref32,ref33,ref34,ref35}, and purity \cite{ref36}.
In addition, research in this area has been increasingly establishing a clearer complementary relationship until recently. In the pure case, the wave-particle duality is a strictly complementary relation, but attempts to generalize the equation to mixed states always turn it into an inequality. Through further analysis of the n-path interferometer with the addition of the path detector, Roy proposed a coherence-path predictability-I concurrence triality \cite{ref37}, I concurrence is a normalized entanglement measure with some functional relationship to the generalized concurrence.
Later, new triality relation was proposed By splitting the path distinguishability into path predictability and entanglement measure \cite{ref38}, this new triality formalizes entanglement as the third quantity.
In contrast, a very concise triality relationship was inferred through Fu et al., by combining quantum uncertainty with coherence and path information without adopting interferometric analysis methods \cite{ref39}.

In this paper, we analyze whether all coherence measures satisfy the neat complementarity relation following the dervation of \cite{ref39}. We transform the coherence measure of pure states into a symmetric concave function related only to main diagonal elements, and we find that all coherence measures can establish a neat complementarity relation, the function which is complementary to the symmetric concave function converted from the coherence measure perfectly meets the requirements of the particle property measure. Moreover, we can also obtain a triality relation in the mixed state, although the third quantity is not a complete measure of mixdness, it still satisfies some properties of mixdness measure.

\section{PRELIMINARIES}

In this part we will introduce some basic knowledgement we're going to use. In subsection \uppercase\expandafter{\romannumeral1}, We will briefly introduce the quantum coherence resource thory and the conditions for coherence measures. It is also shown that the coherence measure in any pure state can be expressed as a symmetric concave function, and the coherence measure in mixed state can be constructed by convex tops. And in subsection \uppercase\expandafter{\romannumeral2}, we will enumerate the basic knowledge of wave-particle duality and its recent development, it includes a brief introduction to the theoretical analysis of multipath interferometer  experiments and the recently established wave-particle duality. At last, we will enumerate the conditions that must be satisfied for the measure of wave and particle properties, respectively in subsection \uppercase\expandafter{\romannumeral3}. We wondered whether the measures of particle and wave we propose next meet these conditions.

\subsection{\uppercase\expandafter{\romannumeral1}. Quantum Coherence}

Coherence is a fundamental feature of quantum physics, which represents possible superposition between orthogonal quantum states. Similarly, it is widely believed that quantum superposition is a manifestation of the fluctuating nature of quantum particles. Therefore, the quantum coherence has a strong correspondence with the wave properties of quantum particles.

Based on Baumgratz et al.’s suggestion \cite{ref40}, any proper measure of coherence $C$ must satisfy the following axiomatic postulates.

\hspace*{\fill}

$(C1)$ The coherence measure vanishes on the set of incoherent states, $C(\rho)=0$ for all $\rho\in I$;

$(C2a)$ Monotonicity under incoherent operation $\Phi$, $C(\Phi(\rho))\leq C(\rho)$;

$(C2b)$ Monotonicity under selective measurements on average, $\sum_n p_n C(\rho_n)\leq C(\rho)$, where $p_n=tr(K_n\rho K^\dagger _n)$, $\rho=\frac{1}{p_n}K_n\rho K^\dagger_n$, for all $\{K_n\}$ with $\sum_n K^\dagger_n K_n =I$ and
\begin{equation}\label{7}
K_n\rho K^\dagger_n / Tr(K_n\rho K^\dagger_n) \in I
\end{equation}
for all $\rho\in I$;

$(C3)$ Non-increasing under mixing of quantum state (convexity),
\begin{equation}\label{8}
C(\sum_n p_n \rho_n)\leq \sum_n p_n C(\rho_n)
\end{equation}
for any ensemble $\{p_n,\rho_n\}$.

\hspace*{\fill}

In general, coherence monotone must satisfies the conditions $C1$ $C3$ and $C2a$, and coherence measure must meet all the conditions.

Corresponding to the functional form of the entanglement measure, A proposes the functional form of the coherence measure was given by Du et al. in \cite{ref41} as follow.

Let $\Omega=\{\bm{x}=(x_1,x_2,\dots,x_d)^t\mid\sum_{i=1}^{d} x_i=1\quad\text{and}\quad x_i\geq0\}$, here $(x_1, x_2,\dots,x_d)^t$ denotes the transpose of row vector $(x_1,x_2,\dots,x_d)$. And let $\pi$ be an arbitrary permutation of $\{1,2,\dots,d\}$, $P_\pi$ be the permutation matrix corresponding to $\pi$ which is obtained by permuting the rows of a $d\times d$ identity matrix according to $\pi$. Given any nonnegative function $f:\Omega\mapsto R^+$ such that it is

\hspace*{\fill}

1. If only one of the terms is 1, and the rest is 0, the function is zero, i.e.
\begin{equation}\label{9}
f(P_\pi(1,0,\dots,0)^t)=0
\end{equation}
for every premutation $\pi$,

2. Invariant under any permutation transformation $P_\pi$, i.e.
\begin{equation}\label{10}
f(P_\pi \bm{x})=f(\bm{x})
\end{equation}
for every $\bm x\in\Omega$,

3.  concave, i.e.
\begin{equation}\label{11}
f(\lambda\bm{x}+(1-\lambda)\bm{y})\geq\lambda f(\bm x)+(1-\lambda)f(\bm{y})
\end{equation}
for any $\lambda\in[0,1]\quad\text{and}\quad\bm{x},\bm{y}\in\Omega$.

Then a coherence measure can be derived by defining it for  pure states (normalized vectors $\ket{\psi}=(\psi_1,\psi_2,\dots,\psi_d)^t$ in the fixed basis$\{\ket{i}\}^d_{i=1}$) as
\begin{equation}\label{12}
C_f(\ket{\psi}\bra{\psi})=f((|\psi_1|^2,|\psi_2|^2,\dots,|\psi_d|^2)^t)
\end{equation}
The function $f$ is extended to the whole set of density matrices by the form of convex roof structure.
\begin{equation}\label{13}
C_f(\rho)=\min_{p_j,\rho_j}\sum_jp_jC_f(\rho_j)
\end{equation}
where the minimization is to be performed over all the pure-state ensembles of $\rho$, i.e., $\rho=\sum_jp_j\rho_j$.

\hspace*{\fill}

For any pure state $\ket{\psi}$, all symmetric concave functions $f$ satisfying the above conditions are coherence measures. In reverse, a symmetric concave function can be found for any coherence measure. Moreover, the above conditions also point out a way to construct a coherence measure  of mixed states from a coherence measure of pure states. There are many instances, such as $\alpha$-entropy\cite{ref42}, flidelity coherence measure\cite{ref43}, and so on.

\subsection{\uppercase\expandafter{\romannumeral2}. Wave-Paricle Duality}

There are two main types of n-path interferometers, with and without path dectors, see (FIG.1). A particle is fired from emitter, travels through one of n-paths as the manifestation of the property of particle, and ends up hitting the screen to create a number of interferenve fringes as the manifestation of the property of wave. With the development of coherence resource theory, interference fringes on the screen can be formulated as coherence, and which path the particle passes depends on the path information.

\begin{figure}[h]
    \centering
    \includegraphics[scale=0.4]{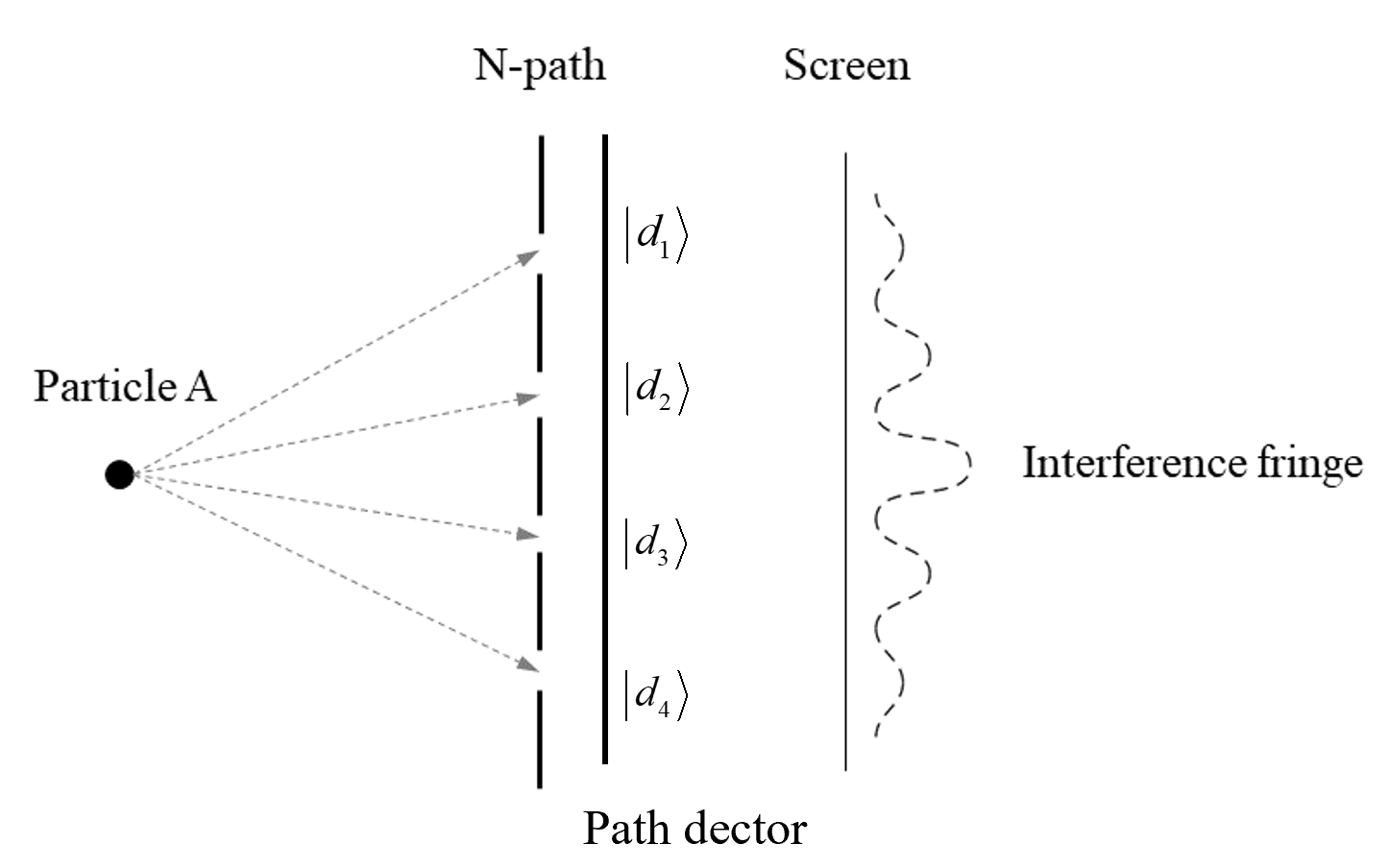}
    \caption{Diagram of an n-path interferometer with path detector, a detector device is installed on each path, and when a particle passes through the i path, the detector changes from $\ket{d_0}$ to $\ket{d_i}$.}
    \label{FIG.1.}
\end{figure}

For the interferometer without path dectors, we cannot clearly determine which path the particle takes, so we can only predict the path with the greatest probability through the properties of the particle itself, which is called the path predictability of the particle. This in itself is a guesswork, so there are many ways to characterize path predictable rows, such as the classic single-bet protocol \cite{ref4} and the multi-bet protocol \cite{ref6,ref7}.

And for the interferometer with path detectors, we can clearly determine which path the particle takes, so we do not have to guess but judge the path, which is called the path distinguishability of the particle. The initial detector state is $\ket{d_0}$, when the particle $\rho=\sum_{ij}\rho_{ij}\ket{\psi_i}\bra{\psi_j}$ pass through the ith path, become entangled with the detector, through the controlled unitary interaction, each of the paths is marked with a detector state $\ket{d_i}$, the path distinguishability is equivalent to discriminating the detector states. In other words, if the quanton passes through the ith path, the resulting the detector states becomes $\ket{d_i}$ with the probability $|\rho_{ii}|$. Beacuse dector states ${\ket{d_i}}$ are not mutually orthogonal in general, so we have partial knowledge about which-path information.

Although the above path predictability and path distinguishability belong to a kind of path information, there are subtle differences between them. For the case of no path detector, the particle does not change when it passes through the path and finally hits the screen after being emitted from the emitter, still remains as
\bea
\rho=\sum_{ij}\rho_{ij}\ket{\psi_i}\bra{\psi_j}
\eea
but in the case of path detectors, the emitter fire a state $\rho=\sum_{ij}\rho_{ij}\ket{\psi_i}\bra{\psi_j}$, when passes through the detector, it becomes entangled with the detector and becomes
\bea
\rho_{sd}=\sum_{i}^{n}\sum_{j}^{n}\rho_{ij}\ket{\psi_i}\bra{\psi_j}\otimes\ket{d_i}\bra{d_j}
\eea
by tracing out over the path-detector states in $\rho_{sd}$, one gets the reduced density matrix for the quanton
\bea
\rho_{s}=\sum_{i}^{n}\sum_{j}^{n}\rho_{ij}\ket{\psi_i}\bra{\psi_j}\bra{d_i}\ket{d_j}
\eea
this will be the state that finally hits the screen.

The first case involves only operations on a single system, while the second case involves two systems (state and path detectors). The path information is different in different cases (the path distinguishability are used in the case with detectors, and the path predictability is used in the case without detectors), and the corresponding coherence measure is also different($\rho_{s}$ is measured when there are detectors, and $\rho$ is measured when there are no detectors). In both cases, the path information and coherence are complementary, which also satisfies the Theorem 1 and Theorem 2 we proposed later.

\subsection{\uppercase\expandafter{\romannumeral3}. Wave-Paricle Mixedness Triality}

Let's review some of the requirements for wave-particle duality in general \cite{ref6,ref7,ref8,ref9}.

Following \cite{ref6,ref7,ref8}, The quantification of the quantum wave feature (we denote it as $W(\rho)$) should satisfies the following reasonable requirements:

\hspace*{\fill}

$(1)$ $W(\rho)$ reaches its global minimum if the state $\rho$ is classical (i.e. diagonal in the computational basis).

$(2)$ $W(\rho)$ reaches its global maximum if the state $\rho$ is pure and a uniform superposition of the states in the computational basis (i.e. $\bra{i}\rho\ket{i}=1/n$ for all $i$).

$(3)$ $W(\rho)$ is invariant under permutations of the diagonal elements $\bra{i}\rho\ket{i}$ of $\rho$.

$(4)$ $W(\rho)$ is convex.

\hspace*{\fill}

In a dual fashion, a quantification of the quantum particle feature (we denote it as $P(\rho)$) should have the following properties \cite{ref6,ref8}:

\hspace*{\fill}

$(1)$ $P(\rho)$ reaches its global maximum if $\bra{i}\rho\ket{i}=1$ for some $i$.

$(2)$ $P(\rho)$ reaches its global minimum if $\bra{i}\rho\ket{i}=1/n$ for all $i$.

$(3)$ $P(\rho)$ is invariant under permutations of the diagonal elements $\bra{i}\rho\ket{i}$ of $\rho$.

$(4)$ $P(\rho)$ is convex.

\hspace*{\fill}

All the above mathematical requirements are motivated by intuitive physical considerations \cite{ref6,ref7,ref8}, and all the wave properties and particle properties should satisfy the above conditions.

Next we introduce a very important triality relation, and part of our results can be seen as a general generalization of this relation.

In \cite{ref39}, it is pointed out that coherence can be measured by the uncertainty of states \cite{ref44}, and the path traversed by particles can be quantified by the path certainty, that is, the certainty of the measurement, in this way, a new wave-particle-mixedness is proposed as follow. This, to some extent, has inspired our study.
\begin{equation}\label{14}
P(\rho|\Pi)+W(\rho|\Pi)+M(\rho)=1
\end{equation}
Where, $P(\rho|\Pi)=\Sigma_{i=1}^n\bra{i}\rho\ket{i}^2$ is the measurement certainty, $W(\rho|\Pi)=\Sigma_{i\not=j}|\bra{i}\rho\ket{i}|^2$ represents the state uncertainty, and $M(\rho)=1-tr\rho^2$ represents the mixedness of particle, which is the uncertainty possessed by the particle itself, and can be associated with other physical quantifiers under certain conditions.

\section{New wave-particle relation}

Next we present the main results of this article. In subsection \uppercase\expandafter{\romannumeral1}, we start our derivation from the basic case of pure states, and obtain a generalized neat wave-particle complementary relation for pure states. In subsection \uppercase\expandafter{\romannumeral2}, we try to extend the complementary relation obtained in \uppercase\expandafter{\romannumeral1} to the mixed states, and we get a generalized wave-particle complementary relation finally.

\subsection{\uppercase\expandafter{\romannumeral1}. Generalized Wave-Particle relation for pure states}

Reviewing the relations proposed in the previous literature, all of them are in a specific functional form, their special form ensures the complementarity of wave and particle, and they can be related to other physical quantifiers (relative entropy and mutual information \cite{ref24}, measurement uncertainty and quantum uncertainty \cite{ref39}, etc). However, not all coherence measures can find a perfect corresponding function to form the duality relation, but these coherence measures have good correlations with other physical quantifiers (such as fidelity \cite{ref43}, etc). So we wonder whether there is a relation that can be applied to all forms of coherence measures.

Based on \cite{ref41}, all coherence measures $C_f(\ket{\psi})$ for pure state $\ket{\psi}=(\psi_1,\psi_2,\dots,\psi_n)^t$ can be transformed as functions with respect to the main diagonal elements of matrix
\bea
C_f(\ket{\psi})=f(|\psi_1|^2,|\psi_2|^2,\dots,|\psi_n|^2)
\eea
It is easy to verify that all functional forms $f$ satisfy the conditions regarding the quantification of the wave feature.

Whereas in view of the wave-particle duality relations that have been proposed can be neat complementary in the pure state, so, out of intuition, we define
\bea
D_f(\ket{\psi})=1-C_f(\ket{\psi})=1-f(|\psi_1|^2,|\psi_2|^2,\dots,|\psi_n|^2)
\eea
as a measure of the properties of the corresponding particle, as well as, path information.

Next we will show that the function $D_f(\ket{\psi})$ we have defined is perfectly consistent with the requirements of particle feature quantification.

\begin{thrm}
	For any given coherence measure $C_f(\ket{\psi})$ for pure state $\ket{\psi}$, there will always be a corresponding path information $D_f(\ket{\psi})$, satisfy $C_f(\ket{\psi})+D_f(\ket{\psi})=1$.
\end{thrm}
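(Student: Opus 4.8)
The stated identity $C_f(\ket{\psi})+D_f(\ket{\psi})=1$ is immediate, since $D_f$ was \emph{defined} as $1-C_f$; the genuine content of the claim is that this $D_f$ is a legitimate path information, i.e.\ that it obeys the four requirements demanded of a particle-feature quantifier $P$ in the preliminaries. The plan is therefore to write $\bm{x}=(|\psi_1|^2,\dots,|\psi_n|^2)^t\in\Omega$, so that $D_f(\ket{\psi})=1-f(\bm{x})$ with $f$ the symmetric concave function representing $C_f$, and then to derive each of the four particle requirements from one of the three defining properties of $f$ (vanishing when a single component equals one, permutation invariance, and concavity).

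Two of the requirements I expect to fall out immediately. Permutation invariance of $D_f$ (requirement~(3)) follows from $f(P_\pi\bm{x})=f(\bm{x})$, since subtracting from $1$ cannot break a symmetry. Convexity of $D_f$ (requirement~(4)) follows because $f$ is concave, so $-f$, and hence $D_f=1-f$, is convex on $\Omega$. For the extremal requirement~(1), I would combine the vanishing property with the nonnegativity built into $f:\Omega\mapsto R^+$: if $\bra{i}\rho\ket{i}=1$ for some $i$ then $\bm{x}$ is a permutation of $(1,0,\dots,0)^t$, so $f(\bm{x})=0$ and $D_f=1$, while $f\geq0$ everywhere forces $D_f=1-f\leq1$, identifying $1$ as the global maximum.

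The only step that needs a real argument is requirement~(2), that $D_f$ attains its global minimum at the uniform point $\bm{u}=(1/n,\dots,1/n)^t$; equivalently, that $f$ is \emph{maximized} there. I would prove this by symmetrization. For arbitrary $\bm{x}\in\Omega$ the permutation average returns the uniform vector, $\frac{1}{n!}\sum_\pi P_\pi\bm{x}=\bm{u}$, and then Jensen's inequality for the concave $f$, together with its permutation invariance, gives
\[
f(\bm{u})=f\!\left(\tfrac{1}{n!}\sum_\pi P_\pi\bm{x}\right)\geq\tfrac{1}{n!}\sum_\pi f(P_\pi\bm{x})=f(\bm{x}).
\]
Since this holds for every $\bm{x}$, the point $\bm{u}$ maximizes $f$ and hence minimizes $D_f$. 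This symmetrization-plus-Jensen argument is the main (indeed the only) nontrivial ingredient; everything else is a direct transcription of the properties of $f$, and the claimed complementarity $C_f+D_f=1$ is true by construction.
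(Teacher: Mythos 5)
Your proof is correct, and its skeleton matches the paper's: the identity $C_f+D_f=1$ is definitional, and the real work is checking that $D_f=1-f$ meets the four particle-feature requirements, with (3) and (4) read off from permutation invariance and concavity of $f$ exactly as the paper does. Where you genuinely depart from the paper is in the two extremal requirements. For requirement (1) the paper only invokes the vanishing condition $f(P_\pi(1,0,\dots,0)^t)=0$ to conclude that $D_f=1$ is the global maximum; you correctly note that identifying $1$ as a \emph{global} maximum also needs $f\geq 0$, which you pull from the codomain $f:\Omega\mapsto R^+$. More substantially, for requirement (2) the paper simply asserts that the uniform state is maximally coherent with $C_f=1$, hence $D_f=0$ --- an assertion that silently relies on the normalization $f(1/n,\dots,1/n)=1$, which is not among the three stated conditions on $f$. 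Your symmetrization-plus-Jensen argument,
\begin{equation*}
f(\bm{u})=f\Bigl(\tfrac{1}{n!}\sum_\pi P_\pi\bm{x}\Bigr)\geq\tfrac{1}{n!}\sum_\pi f(P_\pi\bm{x})=f(\bm{x}),
\end{equation*}
proves that $f$ is maximized (hence $D_f$ minimized) at the uniform point using only permutation invariance and concavity, so it needs no normalization assumption and establishes exactly what requirement (2) asks for: a global minimum, whatever its numerical value. In that one step your proof is both more rigorous and slightly more general than the paper's.
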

\begin{proof}
	As we defined it before:
	\bea
	C_f(\ket{\psi})=f(|\psi_1|^2,|\psi_2|^2,\dots,|\psi_n|^2)
	\eea
	\bea
	D_f(\ket{\psi})=1-f(|\psi_1|^2,|\psi_2|^2,\dots,|\psi_n|^2)
	\eea
	
	\hspace*{\fill}
	
	$(1)$ Set $\bra{i}\ket{\psi}\bra{\psi}\ket{i}=1$ for a definite i, then $\ket{\psi}$ should be of the following form
	\bea
	\ket{\psi}=(0_1,\dots,1_i,\dots,0_n)^t
	\eea
	except for the ith term which is 1, the rest of the term is 0. Base on  (9), and
	\bea
	E_{1i}(1,0,\dots,0)=\ket{\psi}
	\eea
	so we gain $f(\ket{\psi})=0$, meanwhile $D_f(\ket{\psi})$ reaches its global maximum 1.
	
	$(2)$ Set $\bra{i}\ket{\psi}\bra{\psi}\ket{i}=1/n$ for all $i$, then $\ket{\psi}=\sum_{i=1}^{n}\frac{1}{\sqrt{n}}\ket{i}$ is a maximally coherence state. At this time, the coherence of state $\ket{\psi}$ reaches its maximum 1 and the complementary path information to it reaches its globle minimum 0, $D_f(\psi)=0$.
	
	$(3)$ That's straight from (10), $D_f(\ket{\psi})$ is invariant under permutations of the diagonal elements of $\ket{\psi}\bra{\psi}$.

	$(4)$ It is easy to see that $D_f=1-f$ is convex because of $f$ is concave.
	
	\hspace*{\fill}
	
\end{proof}

	In summary, $D_f(\ket{\psi})$ what we have defined satisies the necessary conditions for the quantification of particle feature.

\subsection{\uppercase\expandafter{\romannumeral2}. A Generalized wave-paritcle relation}

Now we extend the results we obtained to mixed states. Unlike in the pure state, the complementarity relations in the mixed state are mostly unneat, a third term is needed to complement it, and this third term is not the same for different analysis methods. In the interferometer considering path detectors, this third term is formulated as a measure of entanglement \cite{ref39}, while in the interferometer without path detectors, this third term is formulated as the mixdness of the state \cite{ref33,ref38,ref45}. This paper does not consider the case of including detectors, and in fact the analysis of the case including detector is also pretty hard.

For some functions $f$, the same form satisfies the conditions for the coherence measure for pure states but not for mixed states. At this point we can construct a coherence measure suitable for mixed states by convex roof structure of $f(\ket{\psi})$.
\bea
C_f(\rho)=\min_{p_j,\rho_j}\sum_jp_jC_f(\rho_j)
\eea
$\rho=\sum_jp_j\rho_j$ is the minimization of all the pure state ensembles of $\rho$.

Next, we analyze the path information, From \cite{ref39,ref15,ref16}, we know that the path information is only related to the main diagonal elements of the state matrix, and pure and mixed states with the same main diagonal elements have the same path information.

For any given mixed state, decoherence is first performed on this state,
\bea
\widetilde{\rho}=\Delta(\rho)
\eea
since coherence and path information are two complementary physical quantities, so decocoherence operation can only clear the coherence of states without worrying about the change of path information. Next we complement some off-diagonal elements to the state after decoherence $\widetilde{\rho}$. These off-diagonal elements are not chosen arbitrarily, we set the element in row $i$ and column $j$ of the matrix to be the root of the product of the $i$th and $j$th entries of the main diagonal.
\bea
\widehat{\rho}_{ij}=\sqrt{\widetilde{\rho}_{ii}\widetilde{\rho}_{jj}}
\eea
Thus we have changed a mixed state $\rho$ into a pure state $\ket{\phi}$ by the above operation. The corresponding pure state have the following form:
\bea
\ket{\phi}=(\sqrt{\rho_{11}},\sqrt{\rho_{22}},\dots,\sqrt{\rho_{nn}})^t
\eea
And the initial mixed state $\rho$ have the same path information with the pure state after a series of operations. So we define the path information of the mixed state $\rho$ as the path information of the corresponding pure state $\ket{\phi}$.
\bea
D_f(\rho)=D_f(\ket{\phi}\bra{\phi})=1-f(\rho_{11},\rho_{22},\dots,\rho_{nn})
\eea
It is not hard to see that this is essentially a function that only depends on the main diagonal elements.

So, we get Theorem 2 combining the above analysis.

\begin{figure}[h]
	\centering
	\includegraphics[scale=0.5]{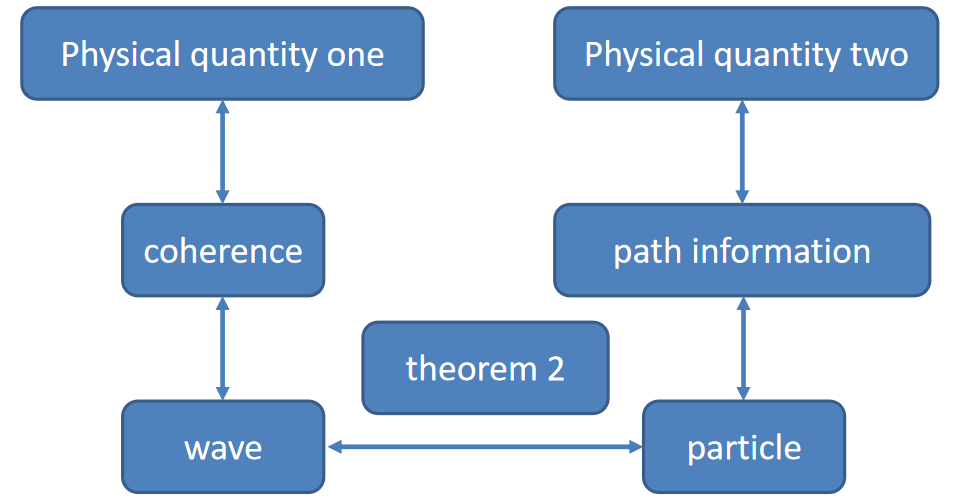}
	\caption{If a physical quantity has a functional relationship with coherence, then theorem 2 shows that the physical quantity also has a functional relationship with path information and particle, and the two functions satisfy the duality relationship, and the reverse is the same.}
	\label{FIG.2.}
\end{figure}

\begin{thrm}
	For any given coherence measure $C_f(\ket{\psi})$ for pure state $\ket{\psi}$, there will always be a corresponding path information $D_f(\rho)$ and coherence measure $C_f(\rho)$, satisfy $C_f(\rho)+D_f(\rho)\leq1$.
	
\bea
	C_f(\rho)=\min_{p_j,\rho_j}\sum_jp_jC_f(\rho_j)
\eea
\bea
	D_f(\rho)=1-f(\rho_{11},\rho_{22},\dots,\rho_{nn})
\eea
\end{thrm}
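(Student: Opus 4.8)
The plan is to reduce the whole statement to a single scalar inequality and then settle it by concavity. Since $D_f(\rho)$ is defined by $D_f(\rho)=1-f(\rho_{11},\rho_{22},\dots,\rho_{nn})$, the asserted triality bound $C_f(\rho)+D_f(\rho)\leq1$ is, after cancelling the $1$'s, exactly equivalent to
\[
C_f(\rho)\leq f(\rho_{11},\rho_{22},\dots,\rho_{nn}).
\]
So the existence of $C_f(\rho)$ and $D_f(\rho)$ is definitional (the convex roof \eqref{13} and the diagonal surrogate pure state $\ket{\phi}=(\sqrt{\rho_{11}},\dots,\sqrt{\rho_{nn}})^t$), and the entire content of the theorem is this one comparison between the convex-roof coherence and $f$ evaluated on the diagonal of $\rho$.

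First I would fix an optimal pure-state decomposition $\rho=\sum_j p_j\ket{\psi_j}\bra{\psi_j}$ realizing the convex roof, so that $C_f(\rho)=\sum_j p_j C_f(\ket{\psi_j})=\sum_j p_j f(|\langle1|\psi_j\rangle|^2,\dots,|\langle n|\psi_j\rangle|^2)$ by \eqref{12}. The key structural observation is that the diagonal entries of $\rho$ are precisely the $p_j$-weighted averages of the diagonal entries of the component pure states, namely $\rho_{ii}=\sum_j p_j|\langle i|\psi_j\rangle|^2$ for every $i$. Equivalently, the diagonal vector $(\rho_{11},\dots,\rho_{nn})^t$ is the convex combination $\sum_j p_j\,(|\langle1|\psi_j\rangle|^2,\dots,|\langle n|\psi_j\rangle|^2)^t$ of points of $\Omega$, so it lies in $\Omega$ and is an average of the diagonal vectors of the $\ket{\psi_j}$.

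Next I would invoke the concavity of $f$ (condition 3, \eqref{11}) in the form of Jensen's inequality applied to this convex combination, which gives
\[
f(\rho_{11},\dots,\rho_{nn})\ \geq\ \sum_j p_j\, f(|\langle1|\psi_j\rangle|^2,\dots,|\langle n|\psi_j\rangle|^2)\ =\ \sum_j p_j C_f(\ket{\psi_j})\ =\ C_f(\rho),
\]
which is exactly the inequality required. Notice that the two ingredients pull in the same direction and so reinforce rather than conflict: concavity yields $f(\text{average})\geq\text{average}(f)$, while the convex roof is a \emph{minimum}, so any decomposition — in particular the optimal one — furnishes a legitimate handle on $C_f(\rho)$.

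The main obstacle here is conceptual rather than computational: one must be careful that the orientation of the two inequalities is consistent, and, crucially, that $D_f(\rho)$ genuinely depends only on the diagonal of $\rho$ (through the surrogate state $\ket{\phi}$) so that it is untouched by precisely the off-diagonal coherence that the convex roof $C_f(\rho)$ accounts for; this separation of roles is what makes the bound non-vacuous. I would close by remarking that equality is recovered exactly when $\rho$ is pure — there the convex roof is trivial and $(\rho_{11},\dots,\rho_{nn})^t$ already equals $(|\psi_1|^2,\dots,|\psi_n|^2)^t$ — so Theorem 2 specializes to the neat complementarity of Theorem 1, and that the strictness of the inequality in the genuinely mixed case is precisely the slack that the mixedness term is meant to fill.
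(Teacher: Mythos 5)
Your proposal is correct and follows essentially the same route as the paper: both hinge on applying the concavity of $f$ (Jensen's inequality) to the fact that the diagonal of $\rho$ is the $p_j$-weighted average of the diagonals of the pure states in a decomposition, yielding $f(\rho_{11},\dots,\rho_{nn})\geq\sum_j p_j C_f(\ket{\psi_j})\geq C_f(\rho)$; your only (immaterial) variation is fixing the optimal convex-roof decomposition up front, where the paper takes an arbitrary decomposition and then invokes the minimization. The paper's proof additionally re-verifies the four path-information axioms for $D_f(\rho)$ (global maximum, global minimum, permutation invariance, convexity), which you treat as definitional, but the substantive inequality is argued identically.
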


\begin{proof}
    Similar to the proof in Theorem 1.

    \hspace*{\fill}

    $(1)$ Set $\rho_{ii}=1$ for a definited $i$, then the main diagonal elements of the quantum state $\rho$ have the following form
    \bea
    (\rho_{11},\rho_{22},\dots,\rho_{nn})=(0,\dots,1,\dots,0)
    \eea
    meanwhile $D_{f}(\rho)=1-f(0,\dots,1,\dots,0)=1$, reaches its global maximum.

    $(2)$ Set the main diagonal elements of the quantum state $\rho$ as
    \bea
    (\rho_{11},\rho_{22},\dots,\rho_{nn})=(1/n,1/n,\dots,1/n)
    \eea
    meanwhile $D_{f}(\rho)=1-f(1/n,1/n,\dots,1/n)=1$, reaches its global minimum.

    $(3)$ and $(4)$ are same with the proof of theorm 1.

    \hspace*{\fill}
	
	For a mixed state $\rho$, there are many pure state decompositions, and we choose one of them at will,
	\bea
	\rho=\sum_{i}p_i\ket{\psi}_i\bra{\psi}_i
	\eea
	We can deduce that
	\bea
	\begin{aligned}
		f(\rho_{11},\rho_{22},\dots,\rho_{nn})&=f(\sum_{i=1}^{n}p_i|\psi_{11}^i|^2,\sum_{i=1}^{n}p_i|\psi_{22}^i|^2,\dots,\sum_{i=1}^{n}p_i|\psi_{nn}^i|^2)\\
		&=f(\sum_{j=1}^{n}p_j(|\psi_{11}^j|^2,|\psi_{22}^j|^2,\dots,|\psi_{nn}^j|^2)^t)\\
		&\geq\sum_{j=1}^{n}p_jf(|\psi_{11}^j|^2,|\psi_{22}^j|^2,\dots,|\psi_{nn}^j|^2)^t\\
		&\geq\min_{p_j,\rho_j}\sum_{j}p_jC_f(\rho_j)\\
		&=C_f(\rho)
	\end{aligned}
	\eea
	So, for a mixed state $\rho$, we get
	\bea
	D_f(\rho)+C_f(\rho)\leq1
	\eea
\end{proof}

Thus, we extend the complementarity in the pure state to the mixed stat. As we said earlier, for any given coherence measure, We can identify the specific form of the complementary relationship by theorem 2. Moreover, if we can measure coherence by a physical quantity, we can also measure path information by that physical quantity, conversely, we can also apply physical quantities associated with path information to measure coherence.

\section{a generalized triaility}

In the following, we analyze why $C_f(\rho)$ and $D_f(\rho)$ for mixed state are not neat complementary. Note that the following analysis only applies to the absence of detectors, as the resource theory of path distinguishability has not been established. We define
\bea
M_f(\rho)=f(\rho_{11},\rho_{22},\dots,\rho_{nn})-\min_{p_j,\rho_j}\sum_{j}p_jC_f(\rho_j)
\eea
so we can get
\bea
D_f(\rho)+C_f(\rho)+M_f(\rho)=1
\eea
This is exactly the triaility what we were hoping for.

Recently, resource theory on predictability was established, and discussed the relation among the resource theories of predictability, coherence and purity, pointed that complementarity relations can be used as purity measures \cite{ref45}. Because purity and mixdness are two opposite concepts, so, taking our defined complementarity
\bea
C_{f}(\rho)+D_{f}(\rho)=P_{f}(\rho)
\eea
as a purity monotone, the complementary $M_{f}(\rho)$ is naturally a quantity associated with mixdness.

Although $M_{f}$ is quite closely related to the mixdness, given that it is acquired from a fixed basis, we are not sure whether it is a significant measure of mixdness. This is because the mixdness of quantum states is independent of their eiqenstates, what we define as $M_{f}$ can intuit from that definition that it is related to a fixed basis. But in practical experiments and theoretical analysis, we always have to fix a set of bases in advance, so the following discussions of $M_{f}(\rho)$ are based on fixed basis. From the special form of function $f$, we can show that $M_{f}$ satisfies some properties as a degree of mixdness.

Next, we show that $M_{f}(\rho)$ satisfies several basic properties of mixdness and has a certain monotonicity.

\begin{thrm}
	$M_{f}(\rho)$ meets the following properties:
	
	\hspace*{\fill}
	
		(1)	$M_f(\rho)$ reaches its global minimum if the state $\rho=\ket{\psi}\bra{\psi}$ is a pure state.
		
		(2)	$M_f(\rho)$ reches its global maxmum if the state $\rho=\mathbbm{1}/n$ is max mixdness state.
		
		(3)	$M_f(\rho)$ satisfies concave.
		
		(4) $M_f(\rho)$ is monotonicity for quantum states that satisfy condition $\sigma=p\rho+(1-p)\mathbbm{1}/n$.
		
		(5) $M_f(\rho)$ is monotonicity for quantum states that satisfy condition $\sigma=p\rho+(1-p)\triangle(\rho)$.
		
		(6) $M_f(\rho)$ is monotonicity for two-dimensional quantum states that satisfy condition $\sigma=p\rho+(1-p)[\mathbbm{1}/2+(\rho-\triangle(\rho))]$.
\end{thrm}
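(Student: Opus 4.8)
The plan is to treat all six items through the single identity $M_f(\rho)=f(\rho_{11},\dots,\rho_{nn})-C_f(\rho)$ together with two structural facts that drive everything. First, since $C_f$ is a convex roof over pure states and $f$ is concave, Jensen applied to any decomposition $\rho=\sum_j p_j\proj{\psi_j}$ gives $\sum_j p_j f(\mathrm{diag}\,\rho_j)\le f(\sum_j p_j\,\mathrm{diag}\,\rho_j)=f(\rho_{11},\dots,\rho_{nn})$, hence $0\le C_f(\rho)\le f(\rho_{11},\dots,\rho_{nn})$ and therefore $M_f(\rho)\ge 0$. Second, $f$ being symmetric and concave attains its global maximum $f(1/n,\dots,1/n)=1$ at the uniform point. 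Items (1) and (2) are then immediate: for a pure state the trivial decomposition gives $C_f=f(\mathrm{diag})$, so $M_f=0$, the global minimum by the first fact; for $\rho=\mathbbm{1}/n$ the state is incoherent so $C_f=0$ while $f(\mathrm{diag})$ is maximal, and the bound $M_f(\rho)\le f(\mathrm{diag}\,\rho)\le f(1/n,\dots,1/n)$ shows this value is the global maximum.

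For (3) I would note that $\rho\mapsto f(\mathrm{diag}\,\rho)$ is concave, being the concave $f$ composed with the linear map $\rho\mapsto\mathrm{diag}\,\rho$, while $\rho\mapsto -C_f(\rho)$ is concave by the convexity axiom $(C3)$; a sum of concave functions is concave. Item (4) follows by combining (3) with (2): $M_f(p\rho+(1-p)\mathbbm{1}/n)\ge pM_f(\rho)+(1-p)M_f(\mathbbm{1}/n)\ge M_f(\rho)$, since $M_f(\mathbbm{1}/n)$ is the global maximum. For (5) the key observation is that $\triangle(\rho)$ has the same diagonal as $\rho$, so $\mathrm{diag}\,\sigma=\mathrm{diag}\,\rho$ and the $f$-term is unchanged; thus $M_f(\sigma)-M_f(\rho)=C_f(\rho)-C_f(\sigma)$, and convexity of $C_f$ together with $C_f(\triangle(\rho))=0$ gives $C_f(\sigma)\le pC_f(\rho)\le C_f(\rho)$, which yields the claimed monotonicity.

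The genuine obstacle is (6), where both the diagonal and the coherence change, so I would pass to the qubit Bloch representation $\rho=\tfrac12(\mathbbm{1}+\vec r\cdot\vec\sigma)$. A short computation shows the map $\rho\mapsto \mathbbm{1}/2+(\rho-\triangle(\rho))$ fixes the transverse components $(r_x,r_y)$ and sets the longitudinal one to zero, so $\sigma$ has Bloch vector $(r_x,r_y,pr_z)$ and $\tau:=\mathbbm{1}/2+(\rho-\triangle(\rho))$ is the equatorial projection of $\rho$; this $\tau$ is a legitimate state because positivity of $\rho$ forces $|\rho_{12}|\le\tfrac12$. Writing $\sigma=p\rho+(1-p)\tau$ and invoking concavity from (3), it suffices to prove $M_f(\tau)\ge M_f(\rho)$. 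I would split this as $M_f(\tau)-M_f(\rho)=\bigl[f(\tfrac12,\tfrac12)-f(\rho_{11},\rho_{22})\bigr]+\bigl[C_f(\rho)-C_f(\tau)\bigr]$, where the first bracket is nonnegative because the uniform point maximizes $f$.

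The remaining and most delicate point is $C_f(\tau)\le C_f(\rho)$, which I would establish by a symmetrization argument on the convex roof. Parametrizing pure qubit states by their Bloch vectors $\vec n_i$, the pure-state cost depends only on the height $n_{z,i}$ through the even function $g(n_z):=f(\tfrac{1+n_z}{2},\tfrac{1-n_z}{2})$. Given an optimal decomposition $\rho=\sum_i p_i\proj{\psi_i}$, I would reflect each pure state through the equator ($n_{z,i}\mapsto -n_{z,i}$) and average the original ensemble with the reflected one; the resulting state has transverse components $(r_x,r_y)$ and vanishing longitudinal component, i.e.\ it equals $\tau$, while its cost is $\tfrac12\sum_i p_i\bigl[g(n_{z,i})+g(-n_{z,i})\bigr]=\sum_i p_i g(n_{z,i})=C_f(\rho)$ since $g$ is even. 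Hence $C_f(\tau)\le C_f(\rho)$, completing (6). I expect this symmetrization to be the crux, and also the reason the claim is confined to two dimensions: the equatorial reflection and the purely $n_z$-dependent pure-state cost are special to the Bloch sphere, and in dimension $n>2$ there is no equally clean analogue of reflecting the diagonal about $\mathbbm{1}/n$ that simultaneously preserves positivity and leaves the convex-roof cost invariant.
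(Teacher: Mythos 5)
Your treatment of items (1)--(5) tracks the paper's own proof almost step for step: both arguments rest on $0\le C_f(\rho)\le f(\rho_{11},\dots,\rho_{nn})$ (Jensen applied to any pure-state decomposition), on the fact that the symmetric concave $f$ is maximized at the uniform point, and on concavity of $M_f$; for (5) you use convexity of $C_f$ together with $C_f(\triangle(\rho))=0$ where the paper invokes concavity of $M_f$ plus $M_f(\triangle(\rho))\ge M_f(\rho)$ --- an immaterial difference. The genuine divergence is item (6), and there your route is better than the paper's. The paper observes that $\sigma$ and $\rho$ share the same off-diagonal entries and concludes that ``the coherence for both states is the same''; for a convex-roof measure $C_f$ this is unjustified, and it is false in general: taking $f(x,1-x)=\sqrt{2}\,[x(1-x)]^{1/4}$ (symmetric, concave, vanishing at the vertices), the pure qubit state with off-diagonal $c<1/2$ has $C_f=\sqrt{2c}$, while the state $\mathbbm{1}/2+c\sigma_x$, with the same off-diagonal, admits the decomposition $2c\proj{+}+(1-2c)\,\mathbbm{1}/2$ with $\ket{+}=(\ket{0}+\ket{1})/\sqrt{2}$, so its convex-roof coherence is at most $2c<\sqrt{2c}$. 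What the monotonicity claim actually requires is only the one-sided inequality $C_f(\tau)\le C_f(\rho)$ for the equatorial projection $\tau=\mathbbm{1}/2+(\rho-\triangle(\rho))$, and your equator-reflection symmetrization --- reflect an optimal ensemble through the equator, average it with the original, and use evenness of $g(n_z)=f\bigl(\tfrac{1+n_z}{2},\tfrac{1-n_z}{2}\bigr)$ --- proves exactly that. Combined with maximality of $f$ at the uniform point and concavity of $M_f$, this closes the argument. So on (6) your proposal is not merely a different route: it supplies the lemma the paper's sketch silently (and incorrectly) assumes in a stronger form, and your closing remark correctly identifies why the construction is intrinsically two-dimensional.
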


The first three terms in theorem 3 are the basic properties of the degree of mixdness, and clauses (4), (5) and (6) indicate that $M_{f}(\rho)$ has a certain monotonicity.

\begin{proof}
	
	\hspace*{\fill}
		
	$(1)$ Set $\rho=\ket{\psi}\bra{\psi}$ is a pure state, at this point, quantum states do not have a degree of mixdness, meanwhile, the $M_{f}(\rho)$ we defined will not exist, i.e. $M_{f}(\ket{\psi}\bra{\psi})=0$. The triality naturally transforms into the wave-pariticle relationship.
		
	$(2)$ Set $\rho=\mathbbm{1}/n$ is the maxmally mixed state, at this point the quantum state $\rho$ has the greatest mixdness. Obviously, quantum state $\rho$ has no coherence and path information, becasuse of the maxmally mixed state has no non-primary diagonal elements and the main diagonal elements are all $1/n$, which means that $M_{f}(\rho)$ reaches its maximum $M_{f}(\rho)=1$.
		
	$(3)$ For any decomposition of state $\rho=\sum_{i}p_{i}\rho_{i}$, it satisfies concave,
	\begin{equation}
		\begin{aligned}
			M_{f}(\sum_{i}p_{i}\rho_{i})&=1-C_{f}(\sum_{i}p_{i}\rho_{i})-D_{f}(\sum_{i}p_{i}\rho_{i})\\
			&\geq1-\sum_{i}p_{i}C_{f}(\rho_{i})-\sum_{i}p_{i}D_{f}(\rho_{i})\\
			&=\sum_{i}p_{i}(1-C_{f}(\rho_{i})-D_{f}(\rho_{i}))\\
			&=\sum_{i}p_{i}M(\rho_{i})
		\end{aligned}
	\end{equation}
    beacuse of that $C_{f}$ and $D_{f}$ are convex.

	$(4)$ For any given state $\rho$, we find the states satisfies that
	\bea
	\sigma=p\rho+(1-p)\mathbbm{1}/n
	\eea
	which is the common form of mixed state. Obviously, the mixdness of $\sigma$ is higher than the mixdness of $\rho$.
	
	According to the concave of $M_{f}(\rho)$, we can get that
	\begin{equation}
		\begin{aligned}
			M_{f}(\sigma)&=M_{f}(p\rho+(1-p)\mathbbm{1}/n)\\
			&\geq pM_{f}(\rho)+(1-p)M_{f}(\mathbbm{1}/n)\\
			&\geq M_{f}(\rho)
		\end{aligned}
	\end{equation}

    $(5)$ For two given states $\rho$ and $\rho_{1}$, the following relationship is satisfied
    \bea
    \rho_{1}=p\rho+(1-p)\triangle(\rho)
    \eea
    state $\rho_{1}$ and state $\rho$ have the same principal diagonal elements, so the path information for both states is the same, meanwhile, the degree of similarity between state $\rho_{1}$ and the max mixed state is greater than that of state $\rho$, so the mixdness of state $\rho_{1}$ is greater than the mixdness of state $\rho$.

    Correspondingly, we can get
    \begin{equation}
    	\begin{aligned}
    		M_{f}(\rho_{1})&=M_{f}(p\rho+(1-p)\triangle(\rho))\\
    		&\geq pM_{f}(\rho)+(1-p)M_{f}(\triangle{\rho})\\
    		&\geq M_{f}(\rho)
    	\end{aligned}
    \end{equation}

    $(6)$ Similar to $(5)$, for two given two dimensional states $\rho$ and $\rho_{2}$, the following relationship is satisfied
    \bea
    \rho_{2}=p\rho+(1-p)[\mathbbm{1}/2+(\rho-\triangle(\rho))]
    \eea
    where $\rho^{'}=\mathbbm{1}/2+(\rho-\triangle(\rho))$ satisfies that
    \bea
    \rho^{'}_{12}=\rho_{12}=\sqrt{\rho_{11}\rho_{22}}\leq \sqrt{\rho^{'}_{11}\rho^{'}_{11}}=1/2
    \eea
    so the determinant of $\rho^{'}$ is greater than zero, which means that $\rho^{'}$ is a positive definite matrix.
    \bea
    |\rho^{'}|=|\rho_{11}^{'}\rho_{22}^{'}-(\rho_{12}^{'})^{2}|=\frac{1}{4}-(\rho_{12}^{'})^{2}\geq 0
    \eea
    Now we can say that $\rho^{'}$ is a martix of quantum states. Similar to $(5)$, state $\rho_{2}$ and state $\rho$ have the same non-primary diagonal elements, so the coherence for both states is the same, meanwhile, the mixdness of state $\rho_{2}$ is greater than the mixdness of state $\rho$.

    And we can get
    \bea
    M_{f}(\rho_{2})\geq M_{f}(\rho)
    \eea
    refer to (13).
\end{proof}

For easy of understanding, we use a more intuitive way to describe the mixdness measure $M_{f}(\rho)$ we defined and the properties expressed in Theorem 3.

\begin{figure}[h]
	\centering
	\includegraphics[scale=0.4]{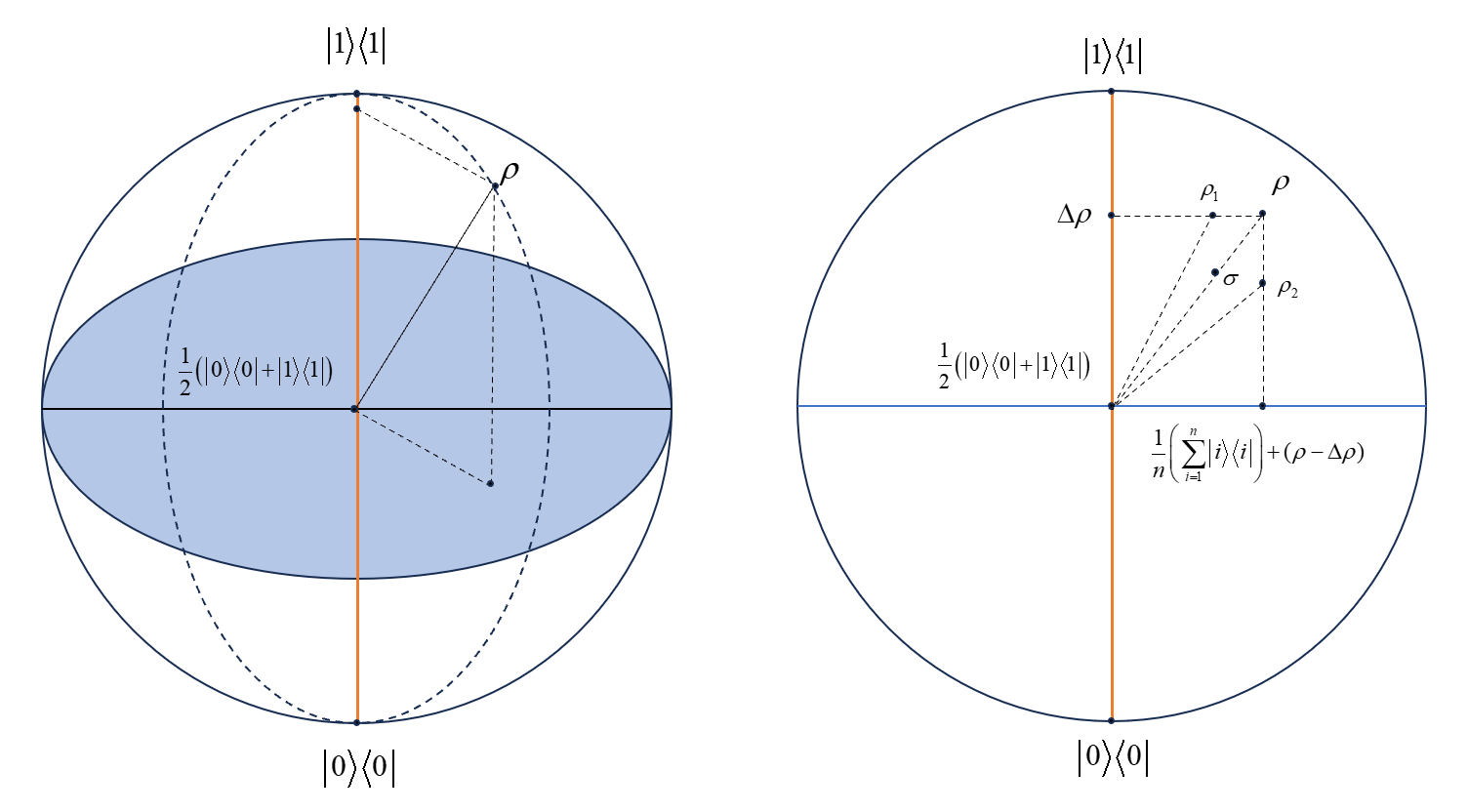}
	\caption{A diagram of Bloch's sphere, where the set of free states $F_{predictability}$ is represented as a blue plane(XY-plane), the set of free states $F_{coherence}$ is represented as a red line(Z-axes), and the set of free states $F_{purity}$ is represented as the center of the sphere.}
	\label{FIG.3.}
\end{figure}

In an intuitive way, it is well known that two dimensional states correspond one-to-one to points on the Bloch sphere, the pure state corresponds to a point on the sphere and the mixed state corresponds to a point inside the sphere, as showen in Figure.3. For a given state $\rho$, coherence measures are represented as the perpendicular line from point $\rho$ to Z-axes (free states set of coherence), and the path predictability are represented as the perpendicular line from point $\rho$ to XY-plane (free states set of path predictability) \cite{ref45}, which shows that the coherence and path predictability are only relevant for non-primary diagonal elements and primary diagonal elements. Purity is expressed as the connection of point $\rho$ to the center of the sphere (free state of purity) \cite{ref46}, and conversely, the mixdness is expressed as the shortest distance from point $\rho$ to the sphere. This is partly enough to help us understand the triaility.

For a given state $\rho$, We make three perpendicular lines to each of the three sets of free states as follow, where $F_{purity}(\rho)$ is the free states set of purity, $F_{coherence}$ is the free states set of coherence and $F_{predictability}$ is the free states set of path predictability. Since the resource theory of mixdness has not been established, becaues of the free states set of mixdness is a nonconvex set, so we analyze mixdness through the resource theory of purity here.
\bea
F_{purity}(\rho)=\{\mathbbm{1}/n\}
\eea
\bea
F_{coherence}(\rho)=\{\rho|\rho=\sum_{i}\ket{i}\bra{i}\}
\eea
\bea
F_{predictability}(\rho)=\{\rho|\rho=(1-p)\frac{1}{n}+p\ket{\psi_d}\bra{\psi_d}\}
\eea
where $\ket{\psi_{d}}=\frac{1}{\sqrt{d}}\sum_{j}e^{i\phi_{j}}\ket{x_j}$, the main  diagonal elements of states belong to set $F_{path}$ are $\frac{1}{n}$, and the non-principal diagonal elements can be any value \cite{ref45}. So the points on all three lines satisfy the monotonicity of the $M_{f}(\rho)$ from (4), (5) and (6) in Theorem 3.

For an instance, given a state $\rho$, makes a perpendicular line towards the Z-axis, the length of this line represents the coherence of this state $\rho$, and the states on this line has the same path predictability with the state $\rho$. Let's take a state $\rho_{1}$ on the line, we can get the mixdness of $\rho_{1}$ is greater than $\rho$, because of state $\rho_{1}$ is closer to the maximum mixed state than state $\rho$. And we can get that $M_{f}(\rho_{1})\geq M_{f}(\rho)$ from $(4)$ of Theorem 3, which means $M_{f}$ satisfies monotonicity for the states on the line. Similarly, monotonicity is also satisfied on the line between state $\rho$ and the center of the sphere and on the vertical line between state A and the XY-plane.

\section{Example}

Next we will give two examples to illustrate our proposed theorem.

For some well-formed coherence measures, the theorms we have established are also valid, just like the coherence measure $C_{l1}(\ket{\psi})=\frac{1}{n-1}\sum_{i\neq j}|\psi_{ij}|=\frac{1}{n-1}\sum_{i\neq j}|c_{i}||c_{j}|$ difined by L-1 norm, we also can get following functions.
\bea
C_{l1}(\ket{\rho})=\frac{1}{n-1}\sum_{i\neq j}|\rho_{ij}|
\eea
\bea
D_{l1}(\rho)=1-\frac{1}{n-1}\sum_{i\neq j}\sqrt{|\rho_{ii}||\rho_{jj}|}
\eea
\bea
M_{l1}(\rho)=\frac{1}{n-1}\sum_{i\neq j}\sqrt{|\rho_{ii}||\rho_{jj}|}-|\rho_{ij}|
\eea
This is consistent with the result $C(\rho)+D(\rho)\leq1$ in \cite{ref23}, but we have established a triaility relationship $C(\rho)+D(\rho)+M(\rho)=1$ on this basis, $D(\rho)$ is the path predictability. And we can also get $C_{l1}(\rho_{s})+D_{l1}(\rho_{s})+M_{l1}(\rho)\leq1$ for interferometer with detectors bacause of
\bea
\begin{aligned}
	M_{l1}(\rho_{s})&=\frac{1}{n-1}\sum_{i\neq j}\sqrt{|\rho_{ii}||\rho_{jj}||d_{i}|^2|d_{j}|^2}-|\rho_{ij}||\bra{d_{i}}\ket{d_{j}}|\\
	&=\frac{1}{n-1}\sum_{i\neq j}\sqrt{|\rho_{ii}||\rho_{jj}|}-|\rho_{ij}||\bra{d_{i}}\ket{d_{j}}|\\
	&\geq\frac{1}{n-1}\sum_{i\neq j}\sqrt{|\rho_{ii}||\rho_{jj}|}-|\rho_{ij}|\\
	&=M_{l1}(\rho)
\end{aligned}
\eea
$D(\rho_{s})$ is the path distinguishabililty.

So, not only do we establish a triality relationship without detectors, we also establish a triality relationship with detectors based on the special form of function $f$.

Next, we mainly introduce how to establish a triality relationship for the general form coherence measurement, that is, the coherence measurement in the mixed state through the convex roof structure. Fidelity, as an important physical quantity, plays a vital role in quantum information theory, however, as mentioned above, the coherence measure defined by fidelity does not have a good form, and the coherence measure can only be established in the mixed state by the method of convex roof construction \cite{ref43}, and the coherence measure difined by fidelity is
\bea
C_{F}(\ket{\psi})=\min_{\sigma\in{I}}\sqrt{1-F(\ket{\psi},\sigma)}
\eea
where $F(\rho,\sigma)=(Tr\sqrt{\sqrt{\rho}\sigma\sqrt{\rho}})^2$. Set $\ket{\psi}=\sum_{i}c_{i}\ket{i}$, and we can get that $C_{F}(\ket{\psi})=\sqrt{1-|c_{i}|^{2}}$, so according to theorem 1 and theorem 2, we can get the following functions.
\bea
C_{F}(\rho)=\min_{p_n,\psi_n}\sum_{n}p_{n}\sqrt{1-|c_{i}^{(n)}|^{2}}
\eea
\bea
D_{F}(\rho)=1-\sqrt{1-|\rho_{ii}|}
\eea
\bea
M_{F}(\rho)=\sqrt{1-|\rho_{ii}|}-\min_{p_n,\psi_n}\sum_{n}p_{n}\sqrt{1-|c_{i}^{(n)}|^{2}}
\eea
It is easy to verify that $D_{F}$ can be used as a predictability measure\cite{ref45}, and for interferometers with or without detectors, the wave-particle relationship $C_{F}(\rho)+D_{F}(\rho)\leq1$ can be held, the difference is only that the independent variable in the function is $\rho$ or $\rho_{s}$. Not only that, but for interferometer without detectors, we can also establish a triaility relationship $C_{F}(\rho)+D_{F}(\rho)+M_{F}(\rho)=1$, $M_{F}$ as a representation of mixdness, satisfies the properties in theorem 3. Thus, we obtain a measure of predictability and particle properties through the coherence measure defined by fidelity.


So far, we have shown by two examples that a wave-particle relationship can be established for any form of coherence measure, and $D_{f}$ established in this way can also be used as a measure of predictable resources. The third term is closely related to mixdness, and the $M_{f}$ established for some special form of coherence measure has more complete properties, which depends entirely on the form of the function $f$.

\section{Conclusion}

We end up with a generalized wave-particle-mixedness triality, which is applicable to any form of coherence measure and path information. We start with the pure state case, for any form of coherence measure, by transforming the given coherence measure into a function form related to the main diagonal elements, the corresponding path information is found in a function form, and we proved that the defined path information meets the requirement of particle feature quantification. Then we extend this form to mixed state case and establish a generalized wave-particle-mixedness triality. Although the characterization of mixedness defined by us is not a complete measure of mixedness, it also meets some basic requirements for mixdness measurement and has a certain monotonicity for some states that satisfy certain relationships. From the perspective of resource theory, the form of path information we propose can be used as a measure of predictability. The triality we have established also links between resource theories such as coherence, predictability, purity and mixdness.

At last, we give two examples of how we can use our theorem to establish a trality relation for a given coherence measure, whether the coherence measure has a good form in the mixed state or not. It is also pointed out that the mixdness defined for some special coherence measures has more complete properties.
At the same time, we look forward to apply this method to other analytical methods, and build new triality relationships, such as establish a general triality relationship for the interferometer with detectors. And we look forward to applying our theorem to other concrete coherence measures and analyze whether the mixdness term has more complete properties.

\section{Acknowledgments}

This project is supported by the National Natural Science Foundation of China (Grants No. 12050410232).

\appendix

\begin{thebibliography}{99}%

\bibitem{ref1} N. Bohr, {\it The quantum postulate and the recent development of atomic theory}, \DOI{https://doi.org/10.1038/121580a0}{Nature 121, 580–590 (1928)}.

\bibitem{ref2} W. K. Wootters and W. H. Zurek, {\it Complementarity in the doubele-slit experiment: Quantum nonseparability and a quantitative statement of Bohr's principle}, \DOI{https://doi.org/10.1103/PhysRevD.19.473}{Phys. Rev. D 19,473(1979)}.

\bibitem{ref3} D. M. Greenberger and A. Yasin, {\it Simultaneous wave and particle knowledge in a neutron interferometer}, \DOI{10.1016/0375-9601(88)90114-4}{Phys. Lett. A 128,391(1988)}.

\bibitem{ref4} G. Jaeger, A. Shimony and L. Vaidman, {\it Two interferometric complementarities}, \DOI{https://doi.org/10.1103/PhysRevA.51.54}{Phys. Rev. A 51,54(1995)}.

\bibitem{ref5} B. G. Englert, {\it Fringe Visibility and Which-Way Information: An Inequality}, \DOI{https://doi.org/10.1103/PhysRevLett.77.2154}{Phys, Rev. Lett. 77,2154(1996)}.

\bibitem{ref6} S. Durr, {\it Quantitative wave-particle duality in multibeam interferometers}, \DOI{https://doi.org/10.1103/PhysRevA.64.042113}{Phys. Rev. A 64,042113(2001)}.

\bibitem{ref7} X. Peng, X. Zhu, D. Suter, J. Du, M. Liu and K. Gao, {\it Quantification of complementarity in multiqubit systems}, \DOI{https://doi.org/10.1103/PhysRevA.72.052109}{Phys. Rev. A 72,052109(2005)}.

\bibitem{ref8} B-G Englert, D. Kaszlikowsk, L. C. Kwek and W. H. Chee, {\it Wave-Particle duality in multi-path interferometers: general concepts and three-path interferometers}, \DOI{https://doi.org/10.1142/S0219749908003220}{Int. J. Quantum Inf. 06 129(2008)}.

\bibitem{ref9} X. Lu, {\it Quantitative wave-particle duality as quantum state discrimination}, \DOI{https://doi.org/10.1103/PhysRevA.102.022201}{Phys. Rev. A 102 022201(2020)}.

\bibitem{ref10} W. K. Wootters and W. H. Zurek, {\it Complementarity in the double-slit experiment: quantum nonseparability and a quantitative statement of Bohr's principle}, \DOI{https://doi.org/10.1103/PhysRevD.19.473}{Phys. Rev. D 19 473(1979)}.

\bibitem{ref11} D. M. Greenberger and A. Yasin, {\it Simultaneous wave and particle knowledge in a neutron interferometer}, \DOI{https://doi.org/10.1016/0375-9601(88)90114-4}{Phys. Lett. A 128 391(1988)}.

\bibitem{ref12} B-G Englert, {\it Fringe visibility and which-way information: an inequality}, \DOI{https://doi.org/10.1103/PhysRevLett.77.2154}{Phys. Rev. Lett. 77 2154(1996)}.

\bibitem{ref13} J. A. Vaccaro, {\it Particle-wave duality: A dichotomy between symmetry and asymmetry}, \DOI{https://doi.org/10.1098/rspa.2011.0271}{Proc. R. Soc. A 468 1065(2012)}.

\bibitem{ref14} P. J. Coles, J. Kaniewski and S. Wehner, {\it Equivalence of wave-particle duality to entropic uncertainty}, \DOI{https://doi.org/10.1038/ncomms6814}{Nat. Commun. 5 5814(2014)}.

\bibitem{ref15} R. M. Angelo and A. D. Ribeiro, {\it Wave-particle duality: an information-based approach}, \DOI{https://doi.org/10.1007/s10701-015-9913-6}{Found. Phys. 45 1407(2015)}.

\bibitem{ref16} P. J. Coles, {\it Entropic framework for wave-particle duality in multipath interferometers}, \DOI{https://doi.org/10.1103/PhysRevA.93.062111}{Phys. Rev. A 93 62111(2016)}.

\bibitem{ref17} M. Jakob and J. A. Bergou, {\it Complementarity and entanglement in bipartite qubit systems}, \DOI{https://doi.org/10.1103/PhysRevA.76.052107}{Phys. Rev. A 76 052107(2007)}.

\bibitem{ref18} M. Jakob and J. A. Bergou, {\it Quantitative complementarity relations in bipartite systems: entanglement as a physical reality}, \DOI{https://doi.org/10.1016/j.optcom.2009.10.044}{Opt. Commun. 283 827(2010)}.

\bibitem{ref19} X-F Qian, A. N. Vamivakas and J. H. Eberly, {\it Entanglement limits duality and vice versa}, \DOI{http://dx.doi.org/10.1364/OPTICA.5.000942}{Optica 5 942(2018)}.

\bibitem{ref20} F. Zela De, {\it Optical approach to concurrence and polarization}, \DOI{https://doi.org/10.1364/OL.43.002603}{Opt. Lett. 43 2603(2018)}.

\bibitem{ref21} X-F Qian, K. Konthasinghe, S. K. Manikandan, D. Spiecker, A. N. Vamivakas and J. H. Eberly, {\it Turning off quantum duality}, \DOI{https://doi.org/10.1103/PhysRevResearch.2.012016}{(Phys. Rev. Res. 2 012016(R)(2020))}.

\bibitem{ref22} T. Qureshi, {\it Predictability, distinguishability and entanglement}, \DOI{https://doi.org/10.1364/OL.415556}{Opt. Lett. 46 492(2021)}.

\bibitem{ref23} M. N. Bera, T. Qureshi, M. A. Siddiqui and A. K. Pati, {\it Duality of quantum coherence and path distinguishability}, \DOI{https://doi.org/10.1103/PhysRevA.92.012118}{Phys. Rev. A 92 012118(2015)}.

\bibitem{ref24} E. Bagan, J. A. Bergou, S. S. Cottrell and M. Hillery, {\it Relations between coherence and path information}, \DOI{https://doi.org/10.1103/PhysRevLett.116.160406}{Phys. Rev. Lett. 116 160406(2016)}.

\bibitem{ref25} L. Chang, S. Luo and Y. Sun, {\it Superposition quantification}, \DOI{10.1088/0253-6102/68/5/565}{Commum. Theor. Phys. 68 565(2017)}.

\bibitem{ref26} T. Biswas, M. G. Diaz and A. Winter, {\it Interferometric visibility and coherence}, \DOI{https://doi.org/10.1098/rspa.2017.0170}{Proc. R. Soc. A 473 20170170}.

\bibitem{ref27} S. Luo and Y. Sun, {\it Quantum coherence versus quantum uncertainty}, \DOI{https://doi.org/10.1103/PhysRevA.96.022130}{Phys. Rev. A 96 022130(2017)}.

\bibitem{ref28} S. Luo and Y. Sun, {\it Coherence and complementarity in state channel interaction}, \DOI{https://doi.org/10.1103/PhysRevA.98.012113}{Phys. Rev. A 98 012113(2018)}.

\bibitem{ref29} M-L Hu, X. Hu, J. Wang, Y. Peng, Y-R Zhang and H. Fan, {\it Quantum coherence and geometric quantum discord}, \DOI{https://doi.org/10.1016/j.physrep.2018.07.004}{Phys. Rep. 762 1(2018)}.

\bibitem{ref30} C. Xiong, A. Kumar and J. Wu, {\it Family of coherence measures and duality between quantum coherence and path distinguishability}, \DOI{https://doi.org/10.1103/PhysRevA.98.032324}{Phys. Rev. A 98 032324(2018)}.

\bibitem{ref31} K. Bu, L. Li, J. Wu and S-M Fei, {\it Duality relation between coherence and path information in the presence of quantum memory}, \DOI{10.1088/1751-8121/aa9b4f}{J. Phys, A: Math. Theor. 51 085304(2018)}.

\bibitem{ref32} E. Bagan, J. Calsamiglia, J. A. Bergou and M. Hillery, {\it Duality games and operational duality realtions}, \DOI{https://doi.org/10.1103/PhysRevLett.120.050402}{Phys. Rev. Lett. 120 050402(2018)}.

\bibitem{ref33} P. Roy and T. Qureshi, {\it Path predictability and quantum coherence in multi-slit interference}, \DOI{10.1088/1402-4896/ab1cd4}{Phys. Scr. 94 095004(2019)}.

\bibitem{ref34} Y. Sun and S. Luo, {\it Quantifying interference via coherence}, \DOI{https://doi.org/10.1002/andp.202100303}{Ann. Physik 533 2100303(2021)}.

\bibitem{ref35} M. L. W. Basso and J. Maziero, {\it An uncertainty view on complementarity and a complementarity view on uncertainty}, \DOI{https://doi.org/10.1007/s11128-021-03136-7}{Quantum Inf. Process. 20 1(2021)}.

\bibitem{ref36} X-F Qian and G. S. Agarwal, {\it Quantum duality: a source point of view}, \DOI{https://doi.org/10.1103/PhysRevResearch.2.012031}{Phys. Rev. Res. 2 012031(R)(2020)}.

\bibitem{ref37} Abhinash Kumar Roy, Neha Pathania, Nitish Kumar Chandra, Prasanta K. Panigrahi, Tabish Qureshi, {\it Coherence, path predictability, and I concurrence: A triality}, \DOI{10.1103/PhysRevA.105.032209}{Phys. Rev. A 105 032209(2022)}.

\bibitem{ref38} Marcos L. W. Basso, Jonas Maziero, {\it Entanglement monotones connect distinguishability and predictability}, \DOI{https://doi.org/10.1016/j.physleta.2021.127875}{Phys. Lett. A 425 (2022) 127875}.

\bibitem{ref39} Shuangshuang Fu and Shunlong Luo, {\it From wave-particle duality to wave-particle-mixedness trially: an uncertainty approach}, \DOI{10.1088/1572-9494/ac53a2}{Commun. Theor. Phys. 74 035103}.

\bibitem{ref40} T. Baumgratz, M. Cramer and M. B. Plenio, {\it Quantifying coherence}, \DOI{https://doi.org/10.1103/PhysRevLett.113.140401}{Phys. Rev. Lett. 113 140401(2014)}.

\bibitem{ref41} ShuangPing Du, ZhaoFang Bai, XiaoFei Qi, {\it Coherence measures and optimal conversion for coherent states}. \DOI{https://doi.org/10.26421/QIC15.15-16-3}{Quantum Information and Computation 1307-1316(2015)}.

\bibitem{ref42} Alexey E. Rastegin, {\it Quantum-coherence quantifiers based on the Tsallis relative entropies}, \DOI{https://doi.org/10.1103/PhysRevA.93.032136}{Phys. Rev. A 93 032136(2016)}.

\bibitem{ref43} C. L. Liu1, Da-Jian Zhang, Xiao-Dong Yu, Qi-Ming Ding, Longjiang Liu, {\it A new coherence measure based on fidelity}. \DOI{https://doi.org/10.1007/s11128-017-1650-7}{Quantum Inf Process (2017) 16:198}.

\bibitem{ref44} Y. Sun and S. Luo, {\it Coherence as uncertainty}, \DOI{https://doi.org/10.1103/PhysRevA.103.042423}{Phys. Rev. A 103 042423(2021)}.

\bibitem{ref45} Marcos L. W. Basso and Jonas Maziero, {\it Predictability as a quantum resource}, \DOI{https://doi.org/10.1007/s11128-022-03503-y}{Quantum Information Processing 21:187(2022)}.

\bibitem{ref46} Alexander Streltsov, Hermann Kampermann, Sabin Wolk, Manuel Gessner and Dagmar Brub, {\it Maximal coherence and the resource theory of purity}, \DOI{10.1088/1367-2630/aac484}{New J. Phys. 20 053058(2018)}.

\bibitem{ref47} Yuan Sun, Nan Li and Shunlong Luo, {\it Coherence in multipath interference via quantum Fisher information}, \DOI{10.1103/PhysRevA.108.062416}{Phys. Rev. A 108 062416(2023)}.

\end{thebibliography}

%

\end{document}